\newtheorem{theorem}{Theorem}[section]
\newtheorem{acknowledgement}[theorem]{Acknowledgement}
\newtheorem{definition}[theorem]{Definition}
\newenvironment{proof}[1][Proof]{\noindent \textbf{#1.} }{\  \rule{0.5em}{0.5em}}
\numberwithin{equation}{section}
\begin{document}

\author{ G. Amendola\thanks{
Department of Mathematics, University of Pisa. E-mail: amendola@dma.unipi.it}  
\ and M. Fabrizio\thanks{
Department of Mathematics, University of Bologna. E-mail: mauro.fabrizio@unibo.it}}
\title{Thermomechanics of damage and fatigue \\
by a phase field model}
\maketitle

\begin{abstract}
In the paper we present an isothermal model for
describing damage and fatigue by the use of the Ginzburg-Landau (G-L)
equation.

Fatigue produces progressive damage, which is related with a variation of
the internal structure of the material. The G-L equation  studies
the evolution of the order parameter, which describes the constitutive
arrangement of the system and, in this framework, the evolution of damage.
The thermodynamic coherence of the model is proved.

In the last part of the work, we extend the results of the paper to a
non-isothermal system, where fatigue contains thermal effects, which
increase the damage of materials.

Keywords: Damage, fatigue, phase transition, thermodynamics.
\end{abstract}

\section{Introduction}

The fatigue notion was founded on the concept of degraded or tired material
and linked to the observation of material damage subjected to cyclic
loading, which never reaches a level sufficient to cause failure in a single
application (see \cite{Cap1}, \cite{Fleck}, \cite{LA}, \cite{Paris} and \cite%
{Schu}).

It is apparent that fatigue produces progressive damage involving plastic
deformation, crack nucleation, creep rupture and finally rapid fracture.
Material damage is the gradual process of mechanical deterioration, that
basically results in a structural component failure (see \cite{Schu} \cite%
{Kork} and \cite{St.Ra.Fu}).

Among the first and most important mathematical models of fatigue, we recall
Caputo's papers \cite{Cap1}, \cite{Cap2} and \cite{Cap3}, where he proposes
a definition of fatigue using the fractional derivative defined by him in 
\cite{Cap4}. This model estimates fatigue as function of the maximum strain
applied, related to the number of cycles and of their frequency.

Fremond and co-authors in \cite{Fre1}, \cite{Fre2}, \cite{Fre3} and \cite%
{Fre4} introduce a scalar variable $\varphi $ describing the damage of the
material and then suggest a definition of fatigue by the mechanical work
that the system has performed starting from its initial or virgin state at
the time $t_{0}$. The variable $\varphi $, in the following called phase
field, describes the internal structural changes related to the damage of
the material. In this paper, we denote by $\mathbf{\tilde{T}}$ the stress
tensor of the virgin material (no damage), then the stress connected with
damage is defined by%
\begin{equation}
\mathbf{T}(x,t)=\left \{ 
\begin{array}{ll}
(1-\varphi )^{2}\mathbf{\tilde{T}}(x,t) & for~\varphi \in \left[ 0,1\right]
\\ 
\mathbf{\tilde{T}}(x,t) & for~\varphi <0 \\ 
\mathbf{0} & for~\varphi >1%
\end{array}%
\right.  \label{1}
\end{equation}%
so that we can denote the equation (\ref{1}) by the following synthetic
formula 
\begin{equation*}
\mathbf{T}(x,t)=(1-\breve{\varphi})^{2}\mathbf{\tilde{T}}(x,t),
\end{equation*}%
where 
\begin{equation*}
\breve{\varphi}=\left \{ 
\begin{array}{ll}
\varphi & for~\varphi \in \left[ 0,1\right] \\ 
1 & for~\varphi >1 \\ 
0 & for~\varphi <0%
\end{array}%
.\right.
\end{equation*}

So, we can define fatigue $\mathcal{F}$ as%
\begin{equation}
\mathcal{F(}x,t)=\int_{t_{0}}^{t}\left[ 1-\breve{\varphi}(x,\tau )\right] 
\mathbf{\tilde{T}}(x,\tau )\cdot \mathbf{L}(x,\tau )dt,  \label{2a}
\end{equation}%
where $\mathbf{v}$ denotes the velocity and $\mathbf{L=}\nabla \mathbf{v}$.

We observe that the stress, expressed by (\ref{1}) and (\ref{2a}), decreases
with the increase of damage $\varphi $. Thus, the stress $\mathbf{T}(x,t)$
drops to zero and we have fracture.

In this framework, the evolution of $\varphi $ will be described inside the
Ginzburg-Landau (G-L) theory, which allows to represent phenomena that
exhibit internal structural changes. Inside of this outline, the G-L theory
provides the differential equation%
\begin{eqnarray}
\rho (x)\frac{\partial \varphi (x,t)}{\partial t} &=&\nabla \cdot \frac{1}{%
\kappa (x)}\nabla \varphi (x,t)-\mathcal{F(}x,t)F^{\prime }(\varphi (x,t)) 
\notag \\
&&-\mathcal{F}_{0}\mathcal{(}x)G^{\prime }(\varphi (x,t))+\rho (x)g(x,t),
\label{3}
\end{eqnarray}%
where $\rho $ denotes the density$,~\kappa $ the diffusivity and $\mathcal{F}%
_{0}$ are suitable positive coefficients, while $F^{\prime }(\varphi )$ and $%
G^{\prime }(\varphi )$ are the derivatives of two suitable potentials $%
F(\varphi )$ and $G(\varphi )$. Finally, $g(x,t)$ denotes the sourge of
damage, that in the following, we shall suppose zero.

From (\ref{3}), we may state that from the properties of the potentials $%
F(\varphi )$ and $G(\varphi )$, when fatigue $\mathcal{F(}x,t)$ overlaps $%
\mathcal{F}_{0}\mathcal{(}x),$ the phase $\varphi $ increases and reaches $1.
$

For this study, we associate to the equation (\ref{3}) the motion equation%
\begin{equation}
\rho (x)\frac{\partial \mathbf{v}(x,t)}{\partial t}=\nabla \cdot \mathbf{T}%
(x,t)+\rho (x)\mathbf{b}(x,t),  \label{4}
\end{equation}%
\newline
where $\rho $ denotes the density and $\mathbf{b}$ the body forces.

In the paper, the Gibbs free energy $W$, reduced to a function of $\varphi $
and $\nabla \varphi $ only, contains the following terms%
\begin{eqnarray}
W(\varphi (x,t),\nabla \varphi (x,t)) &=&\int_{\Omega }\left \{ \frac{1}{%
2\kappa (x)}\left[ \nabla \varphi (x,t)\right] ^{2}\right.   \notag \\
&&\left. +\mathcal{F}(x,t)F(\varphi (x,t))+\mathcal{F}_{0}(x)G(\varphi
(x,t))\right \} dx.  \label{5.1}
\end{eqnarray}%
The equation (\ref{3}) is coherent with the definition of the Gibbs free
energy (\ref{5.1}), because we obtain the second member of (\ref{3}) by $%
W(\varphi ,\nabla \varphi )$. So that, using (\ref{3}) and (\ref{5.1}), we
have%
\begin{equation}
\rho (x)\frac{\partial \varphi (x,t)}{\partial t}=-\frac{\delta }{\delta
\varphi }W(\varphi (x,t),\nabla \varphi (x,t)).  \label{6}
\end{equation}

It is worth to observe that, in the first part of the paper, the notion of
fatigue is defined and studied in a general framework, without to
characterize the constitutive equations of materials. Then, in the next
sections, fatigue and damage are defined for elastic and fading memory
materials. For this fatigue model, it is basic to prove a maximum theorem
for the phase field $\varphi $, that is proved in the Section 6, together to
a uniqueness theorem. Hence, in the Section 7, we study the fatigue behavior
under changes of temperature, because damage can be sensitive to frequent
temperature variations and to high gradients or thermal shocks.

\section{State and process for thermomechanical simple models}

In this section, we regard simple materials non undergoing to fatigue
phenomena. Following the papers Truesdell and Noll \cite{TN}, Noll \cite%
{Noll} and Coleman and Owen \cite{CO} (see also \cite{AFG}), we define the
properties of a thermomechanical solid material by the concept of dynamic
system, which makes use of the notion of state and process.

Let us consider a material point $X$ of a body $\mathcal{B}\footnote{%
For the properties of the body $\mathcal{B}$ see Banfi and Fabrizio \cite%
{Ba-Fa}.}$. At time $t\in \left[ 0,T\right] $ $\left( T\in \mathbb{R}\right) 
$, the \textit{configuration} $C(X,t)$ is given by the pair%
\begin{equation}
C(X,t)=\left( \mathbf{F}(X,t),\theta (X,t)\right) \in Lin^{+}\times \mathbb{R%
}^{+},  \label{7}
\end{equation}%
where $\mathbf{F}$ denotes the \textit{deformation gradient }and $\theta $
the \textit{absolute temperature.} A \textit{thermomechanical process }$P$
of duration $d_{P}\in \mathbb{R}^{+}$ is a piecewise continuous function in
the interval $\left[ 0,d_{P}\right) $, given by the triplet%
\begin{equation}
P(t)=\left( \mathbf{L}(t),\dot{\theta}(t),\mathbf{g}(t)\right) \in \Xi ,~\
t\in \left[ 0,d_{P}\right)  \label{8}
\end{equation}%
with $\mathbf{L}(t)\in Lin$ the velocity gradient, $\dot{\theta}(t)=\frac{%
d\theta (t)}{dt}$ and $\mathbf{g}(t)=\nabla \theta (t)$, while $\Xi
=Lin\times \mathbb{R\times R}^{3}.$

In the following, we denote by $P_{t}$ the process restricted to the
interval $\left[ 0,t\right) $, with $t\in \left[ 0,d_{P}\right) .$

A thermomechanical material in a point $X\in \mathcal{B}\ ~$can be view as a
dynamic system, where the process $P$ is the input, while the output $\hat{U}
$ is defined by the functions%
\begin{equation*}
\hat{U}(t)=\left( \mathbf{\tilde{T}}(t),h(t),\mathbf{q}(t)\right) ,~\ t\in %
\left[ 0,d_{P}\right)
\end{equation*}%
with$\  \mathbf{\tilde{T}}$ the stress tensor,$~h$ the internal heat power
and $\mathbf{q}$ the heat flux.

As defined in \cite{Noll}, \cite{CO} and \cite{AFG}, a simple
thermomechanical dynamic system is defined by the set $(\mathcal{C},\mathcal{%
\Pi },\Sigma ,\hat{\rho})$, where

a - $\mathcal{C}$ is an open, connected subset of $Lin^{+}\times \mathbb{R}%
^{+}$ of configurations $C(X)$,

b - $\Pi $ is the set of smooth thermokinetic processes,

c - $\Sigma $ is a metric space, defined by the state set,

d - $\hat{\rho}:\Sigma \times \Pi \rightarrow \Sigma $ is the evolution
function, such that%
\begin{equation}
\hat{\rho}(\sigma ^{i},P)=\sigma ^{f},  \label{9}
\end{equation}%
where $\sigma ^{i}\in \Sigma $ is the initial state and $\sigma ^{f}\in
\Sigma $ is the state obtained from $\sigma ^{i}$ by the process $P.$
Moreover, the map $\hat{\rho}$ is such that, if $\sigma \in \Sigma $ and $%
P^{1},P^{2}\in \Pi $, then 
\begin{equation}
\hat{\rho}(\sigma ,P^{1}\ast P^{2})=\hat{\rho}(\hat{\rho}(\sigma
,P^{1}),P^{2})\text{,}  \label{10}
\end{equation}%
where $P^{1}\ast P^{2}$ denotes the composition of $P^{1}$ with $P^{2}.$

So that, a thermomechanical material is defined by the functional $\hat{U}%
:\Sigma \times \Xi \rightarrow Sym\times \mathbb{R\times R}^{3}$, whereby%
\begin{eqnarray}
\mathbf{\tilde{T}}(t) &=&\mathbf{\tilde{T}}(\sigma (t),P(t)),  \notag \\
h(t) &=&\hat{h}(\sigma (t),P(t)),  \label{11} \\
\mathbf{q}(t) &=&\mathbf{\hat{q}}(\sigma (t),P(t)).  \notag
\end{eqnarray}

Moreover, we define the null process of duration $d_{P^{0}}$ 
\begin{equation*}
P^{0}(t)=(\mathbf{0}\text{,}0,\mathbf{0}),\  \text{for any }~t\in \left[
0,d_{P^{0}}\right) .
\end{equation*}%
Finally, the null state $\sigma ^{0\text{ }}$is such that%
\begin{equation*}
\hat{\rho}(\sigma ^{0},P^{0})=\sigma ^{0}
\end{equation*}%
for any null process $P^{0}$ of duration $d_{p^{0}}\in \mathbb{R}^{+}.$

\section{Damage and fatigue}

Fatigue describes the material wear resulting from cyclic loading. While
damage features the progressive mechanical deterioration that can involve
crack nucleation, dislocation, plastic deformation, such that ultimately can
bring in component failure.

Because, damage and fatigue are related to one another, we need to
supplement the notion of state with a new variable $\varphi \in \mathbb{R}$,
describing the changes of the internal material structure consequent to
damage and related to fatigue. Hence, in this framework the variable $%
\varphi $ represents variations of internal material structure. Therefore,
the notion of state and process will be modified by the new definitions%
\begin{equation*}
\tilde{\sigma}=(\sigma ,\varphi ,\nabla \varphi ),~\  \tilde{P}(t)=(P(t),\dot{%
\varphi}(t),\nabla \dot{\varphi}),~~t\in \left[ 0,d_{P}\right) ,
\end{equation*}%
while the response of the material $U(t)$ assumes the new representation%
\begin{equation}
\hat{U}(t)=\left( \mathbf{T}(t),h(t),\mathbf{q}(t),\mathcal{F}(t)\right) \
,~\ t\in \left[ 0,d_{P}\right) ,  \label{12}
\end{equation}%
where $\mathbf{T}$ is defined as in (\ref{1}), while the fatigue $\mathcal{F}%
(t)$ is given in (\ref{2a}), which in this framework we write%
\begin{equation}
\mathcal{F}(\tilde{\sigma}^{i},\tilde{P}_{t})=\int_{t_{0}}^{t}\left[ 1-%
\breve{\varphi}(\tau )\right] \mathbf{\tilde{T}}(\sigma (\tau ),P(\tau
))\cdot \mathbf{L}(\tau )d\tau ,  \label{13}
\end{equation}%
\bigskip where $\tilde{\sigma}^{i}$ is the initial state at time $t_{0}$,
while $\tilde{\sigma}^{i}=(\sigma ^{i},0,0)$ and $\sigma (\tau )=\hat{\rho}%
(\sigma ^{i},P_{\tau }).$

In the following of this paper, we shall restrict the study to mechanical
processes only. Whereby, the process is given by $\tilde{P}(t)=(\mathbf{L}%
(t),\dot{\varphi}(t))$ and the response by $U(t)=(\mathbf{T}(t),\mathcal{F}%
(t)).$

For the whole study of the problem, we need to introduce an equation for the
phase field $\varphi .$ Because this variable describes the variation of the
internal structure of the material, the natural representation is given by
the Ginzburg-Landau equation (\ref{3}) with $g=0$, which for this problem
assumes the form%
\begin{equation}
\rho (x)\frac{\partial \varphi (x,t)}{\partial t}=\nabla \cdot \frac{1}{%
\kappa (x)}\nabla \varphi (x,t)-\mathcal{F}(\tilde{\sigma}^{0},\tilde{P}%
_{t})F^{\prime }(\varphi (x,t))-\mathcal{F}_{0}\mathcal{(}x)G^{\prime
}(\varphi (x,t)),  \label{14}
\end{equation}%
where $\kappa (x)$ denotes the diffusivity. In this paper, we choose the two
potentials $F$ and $G$ given by

\begin{equation}
F(\varphi )=-\breve{\varphi},~\ G(\varphi )=\left \{ 
\begin{array}{ll}
\varphi ^{2}-\frac{\varphi ^{3}}{6} & 0\leq \varphi \leq 1 \\ 
\frac{5}{6} & \varphi >1 \\ 
0 & \varphi <0%
\end{array}%
.\right.  \label{15a}
\end{equation}%
In the following, we shall denote

\begin{equation*}
F^{\prime }(\varphi )=\left \{ 
\begin{array}{ll}
-1 & 0\leq \varphi \leq 1 \\ 
0 & \varphi >1 \\ 
0 & \varphi <0%
\end{array}%
\right. :=\left \langle -1\right \rangle \leq 0
\end{equation*}%
\begin{equation*}
G^{\prime }(\varphi )=\left \{ 
\begin{array}{ll}
2\varphi -\frac{\varphi ^{2}}{2} & 0\leq \  \varphi \leq 1 \\ 
0 & \varphi >1 \\ 
0 & \varphi <0%
\end{array}%
\right. :=\left \langle 2\varphi -\frac{\varphi ^{2}}{2}\right \rangle \geq
0.
\end{equation*}

Together with the equation (\ref{14}), we consider the motion equation, that
is%
\begin{equation}
\rho (x)\frac{\partial \mathbf{v}(x,t)}{\partial t}=\nabla \cdot \mathbf{T}%
(x,t)+\rho (x)\mathbf{b}(x,t).  \label{16}
\end{equation}

Of course, the material properties will be defined by the constitutive
equation on the stress tensor $\mathbf{T}(x,t)$, which in this framework is
fixed by the equation (\ref{1}), that we may even write in the equivalent
form 
\begin{equation}
\mathbf{T}(\tilde{\sigma}(x,t),\tilde{P}(x,t))=\left[ 1-\breve{\varphi}(x,t)%
\right] ^{2}\mathbf{\tilde{T}}(\sigma (x,t),P(x,t))  \label{17}
\end{equation}%
so%
\begin{equation}
\mathbf{T}(\tilde{\sigma}(x,t),\tilde{P}(x,t))\cdot \nabla \mathbf{v}(x,t)=%
\left[ 1-\breve{\varphi}(x,t)\right] ^{2}\mathbf{\tilde{T}}(\sigma
(x,t),P(x,t))\cdot \nabla \mathbf{v}(x,t);  \label{17b}
\end{equation}%
then, if we suppose to consider processes leaving from an initial state $%
\tilde{\sigma}^{i}=(\sigma ^{i},0,0)$ at the time $t_{0}$, the equation (\ref%
{17b}) can be written%
\begin{equation}
\mathbf{T}(\tilde{\sigma}(t),\tilde{P}(t))\cdot \nabla \mathbf{v}(t)=\left[
1-\breve{\varphi}(t)\right] \frac{d}{dt}\int_{t_{0}}^{t}\left[ 1-\breve{%
\varphi}(\tau )\right] \mathbf{\tilde{T}}(\sigma (\tau ),P(\tau ))\cdot
\nabla \mathbf{v}(\tau )d\tau .  \label{18a}
\end{equation}

Hence, by the definition (\ref{13}) of fatigue we have%
\begin{equation}
\mathbf{T}(\tilde{\sigma}(t),\tilde{P}(t))\cdot \nabla \mathbf{v}(t)=\left[
1-\breve{\varphi}(t)\right] \mathbf{\tilde{T}}(\sigma (t),P(t))\cdot \nabla 
\mathbf{v}(t)-\breve{\varphi}(t)\frac{d}{dt}\mathcal{F}(\tilde{\sigma}^{i},%
\tilde{P}_{t}).  \label{18b}
\end{equation}

\section{Dissipation Principle}

In order to claim the dissipation principle for any $x\in \mathcal{B}$, we
define the internal mechanical power $\mathcal{P}_{m}^{i}(t)=\mathbf{T}(%
\tilde{\sigma}(t),\tilde{P}(t))\cdot \nabla \mathbf{v}(t)$ and the internal
structural power $\mathcal{P}_{s}^{i}(t)~$at time $t$ 
\begin{equation}
\mathcal{P}_{m}^{i}(\tilde{\sigma}(t),\tilde{P}(t)=\left[ 1-\breve{\varphi}%
(t)\right] \mathbf{\tilde{T}}(\sigma (t),P(t))\cdot \nabla \mathbf{v}(t)-%
\breve{\varphi}(t)\frac{d}{dt}\mathcal{F}(\tilde{\sigma}^{i},\tilde{P}_{t})%
\text{,}  \label{19}
\end{equation}%
\begin{equation}
\mathcal{P}_{s}^{i}(\tilde{\sigma}(t),\tilde{P}(t)=\rho \left[ \dot{\varphi}%
(t)\right] ^{2}+\left \{ \frac{1}{2\kappa }\left[ \nabla \varphi (t)\right]
^{2}\right \} ^{\cdot }+\mathcal{F}_{0}\dot{G}(\varphi (t))+\mathcal{F}(%
\tilde{\sigma}^{i},\tilde{P}_{t})\dot{F}(\varphi (t))\text{,}  \label{20}
\end{equation}%
where the dot $^{\cdot }$ denotes the partial time derivative.

\textbf{Dissipation Principle. }There exists a state function $\psi (\tilde{%
\sigma})$, called free energy, such that%
\begin{equation}
\rho \dot{\psi}(\tilde{\sigma}(t))\leq \mathcal{P}_{m}^{i}(t)+\mathcal{P}%
_{s}^{i}(t)\text{ .}  \label{21}
\end{equation}

By using (\ref{19}) and (\ref{20}), we obtain from (\ref{21}) 
\begin{eqnarray}
&&\rho \dot{\psi}(\tilde{\sigma}(t))\leq \left[ 1-\breve{\varphi}(t)\right] 
\mathbf{\tilde{T}}(\sigma (t),P(t))\cdot \nabla \mathbf{v}(t)-\breve{\varphi}%
(t)\frac{d}{dt}\mathcal{F}(\tilde{\sigma}^{i},\tilde{P}_{t})  \notag \\
&&\quad +\rho \left[ \dot{\varphi}(t)\right] ^{2}+\left \{ \frac{1}{2\kappa }%
\left[ \nabla \varphi (t)\right] ^{2}\right \} ^{\cdot }+\mathcal{F}_{0}\dot{G%
}(\varphi (t))+\mathcal{F}(\tilde{\sigma}^{i},\tilde{P}_{t})\dot{F}(\varphi
(t))\quad   \label{22a}
\end{eqnarray}%
and hence, on account of (\ref{15a})$_{1}$, we have 
\begin{eqnarray}
&&\rho \dot{\psi}(\tilde{\sigma}(t))\leq \left[ 1-\breve{\varphi}(t)\right] 
\mathbf{\tilde{T}}(\sigma (t),P(t))\cdot \nabla \mathbf{v}(t)+\rho \left[ 
\dot{\varphi}(t)\right] ^{2}  \notag \\
&&\quad +\left \{ \frac{1}{2\kappa }\left[ \nabla \varphi (t)\right]
^{2}\right \} ^{\cdot }+\mathcal{F}_{0}\dot{G}(\varphi (t))+\frac{d}{dt}\left[
\mathcal{F}(\tilde{\sigma}^{i},\tilde{P}_{t})F(\varphi (t)\right] ,\quad 
\label{23}
\end{eqnarray}%
or%
\begin{eqnarray}
&&\rho \dot{\psi}(\tilde{\sigma}(t))\leq \left[ 1-\breve{\varphi}(t)\right] 
\mathbf{\tilde{T}}(\sigma (t),P(t))\cdot \nabla \mathbf{v}(t)+\rho \left[ 
\dot{\varphi}(t)\right] ^{2}  \notag \\
&&\quad +\frac{d}{dt}\left \{ \frac{1}{2\kappa }\left[ \nabla \varphi (t)%
\right] ^{2}+\mathcal{F}_{0}G(\varphi (t))+\mathcal{F}(\tilde{\sigma}^{i},%
\tilde{P}_{t})F(\varphi (t))\right \} \text{.\quad }  \label{24}
\end{eqnarray}

This inequality may be written in the equivalent form%
\begin{eqnarray}
&&\rho \dot{\psi}(\tilde{\sigma}(t))\leq \rho \left[ \dot{\varphi}(t)\right]
^{2}+  \notag \\
&&\quad \frac{d}{dt}\left \{ \frac{1}{2\kappa }\left[ \nabla \varphi (t)%
\right] ^{2}+\mathcal{F}_{0}G(\varphi (t))+\mathcal{F}(\tilde{\sigma}^{i},%
\tilde{P}_{t})F(\varphi (t))\right \} \text{.\quad }  \label{24b}
\end{eqnarray}

Now, we denote by $\psi _{V}$ the free energy related to virgin material (no
damage), so that%
\begin{equation}
\rho \dot{\psi}_{V}(\sigma (x,t))\leq \mathbf{\tilde{T}}(\sigma
(x,t),P(x,t))\cdot \nabla \mathbf{v}(x,t).  \label{25}
\end{equation}

Then, from (\ref{24}) and (\ref{25}) we have that the free energy $\psi $
can be written as%
\begin{equation*}
\psi (\tilde{\sigma}(t))=\psi _{D}(\tilde{\sigma}(t)+\psi _{V}(\sigma (t),
\end{equation*}%
where $\psi _{D}$ is the damage component of free energy.

Otherwise, if we define the pseudo fatigue energy $\psi _{F}$ as 
\begin{equation}
\psi _{F}(\tilde{\sigma}(t))=\frac{1}{\rho }\left \{ \frac{1}{2\kappa }\left[
\nabla \varphi (t)\right] ^{2}+\mathcal{F}_{0}G(\varphi (t))+\mathcal{F}(%
\tilde{\sigma}^{i},\tilde{P}_{t})F(\varphi (t))\right \} ;  \label{27}
\end{equation}%
then, from (\ref{24}), we obtain 
\begin{equation}
\rho \left[ \dot{\psi}(\tilde{\sigma}(t))-\dot{\psi}_{F}(\tilde{\sigma}(t))%
\right] \leq \left[ 1-\breve{\varphi}(t)\right] \mathbf{\tilde{T}}(\sigma
(t),P(t))\cdot \nabla \mathbf{v}(t)+\rho \left[ \dot{\varphi}(t)\right] ^{2}.
\label{27-1}
\end{equation}

Of course, once assigned the constitutive equations of virgin material, the
mathematical shape of $\psi (\tilde{\sigma}(t)$ can be well defined.

Otherwise, we may study the behavior of the pseudo energy $\psi _{F}$ with $%
F $ and $G$ defined in (\ref{15a}). So that, if we consider the graph of $%
\psi _{F}$, as a function only of the field $\varphi $, varying the fatigue,
we obtain different shape of $\psi _{F}$ represented in Fig.1 and Fig.2.

\begin{center}
\includegraphics[scale=0.22]{{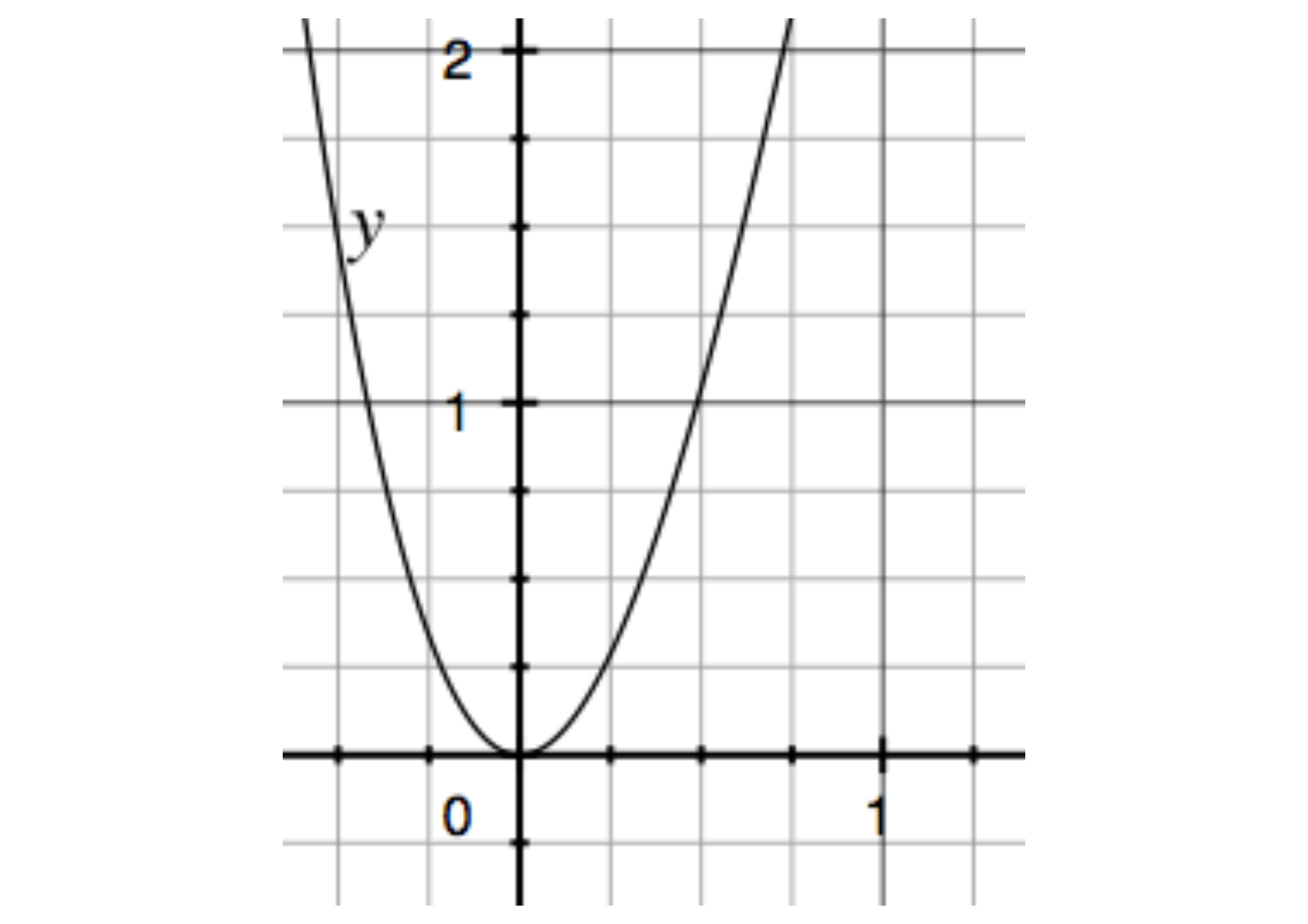}} %
\includegraphics[scale=0.22]{{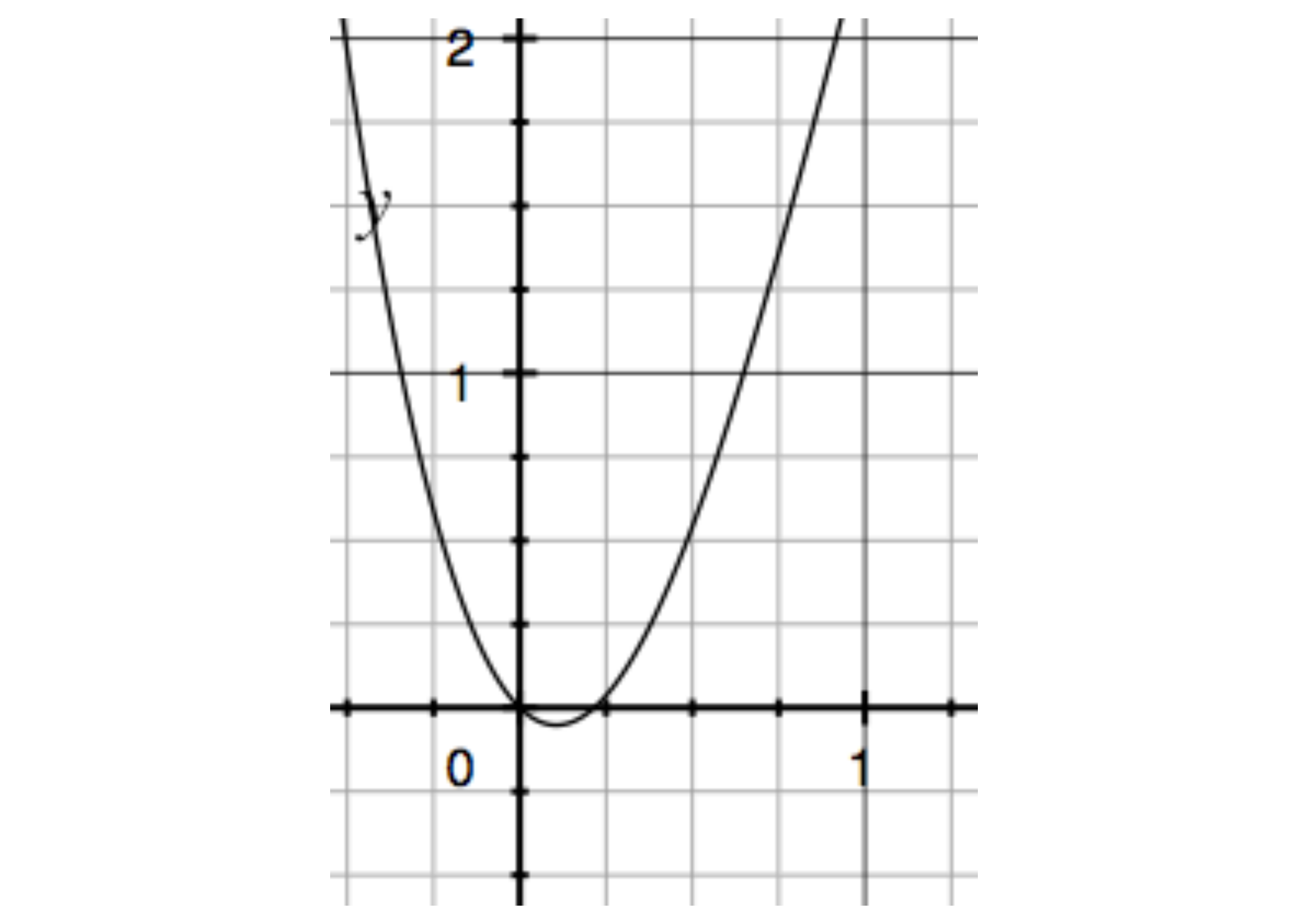}} %
\includegraphics[scale=0.225]{{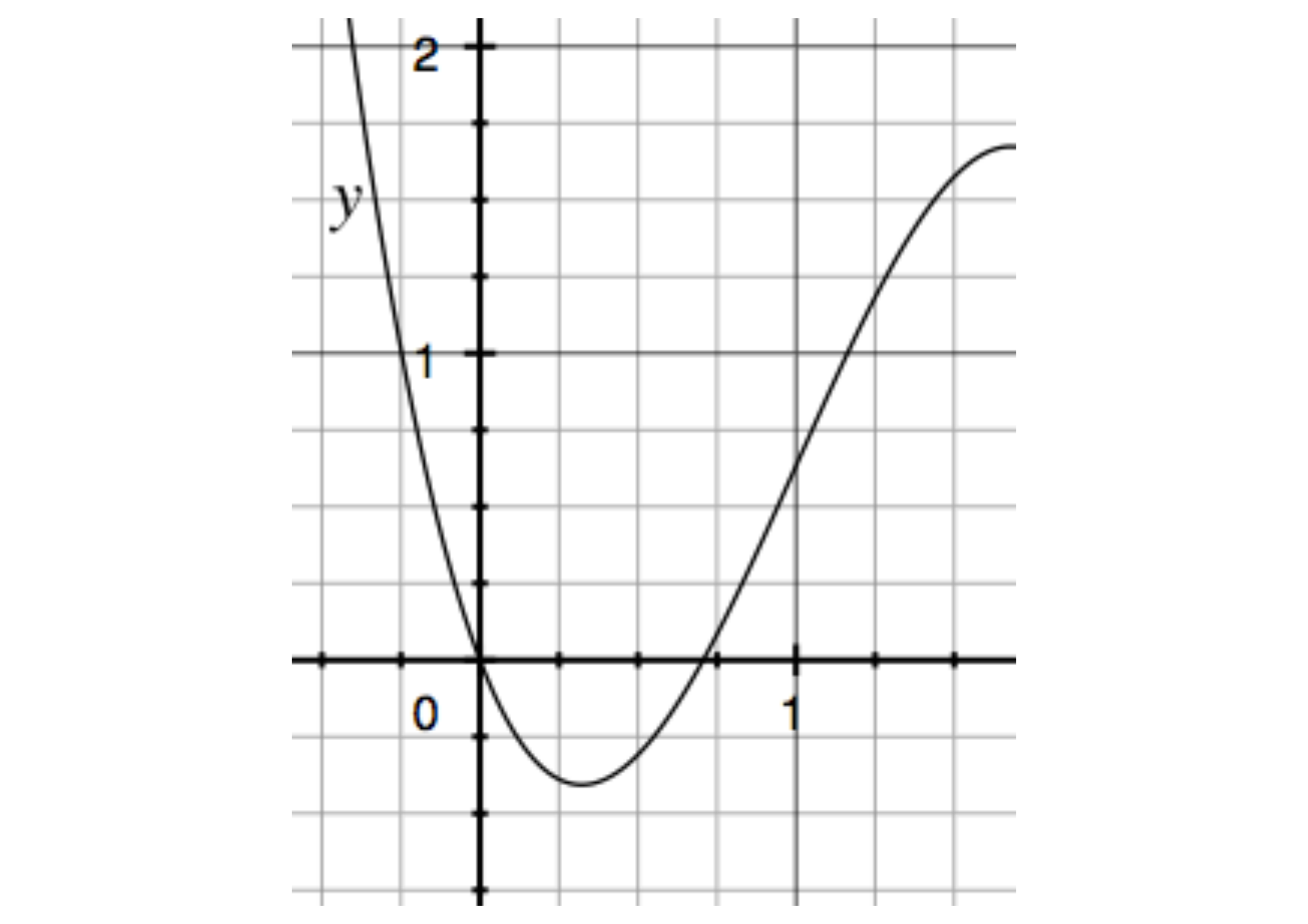}}

Fig.1

The free energies corresponding to different fatigues, with minima in $%
\varphi _{1}=0<\varphi _{2}<\varphi _{3}$

\includegraphics[scale=0.23]{{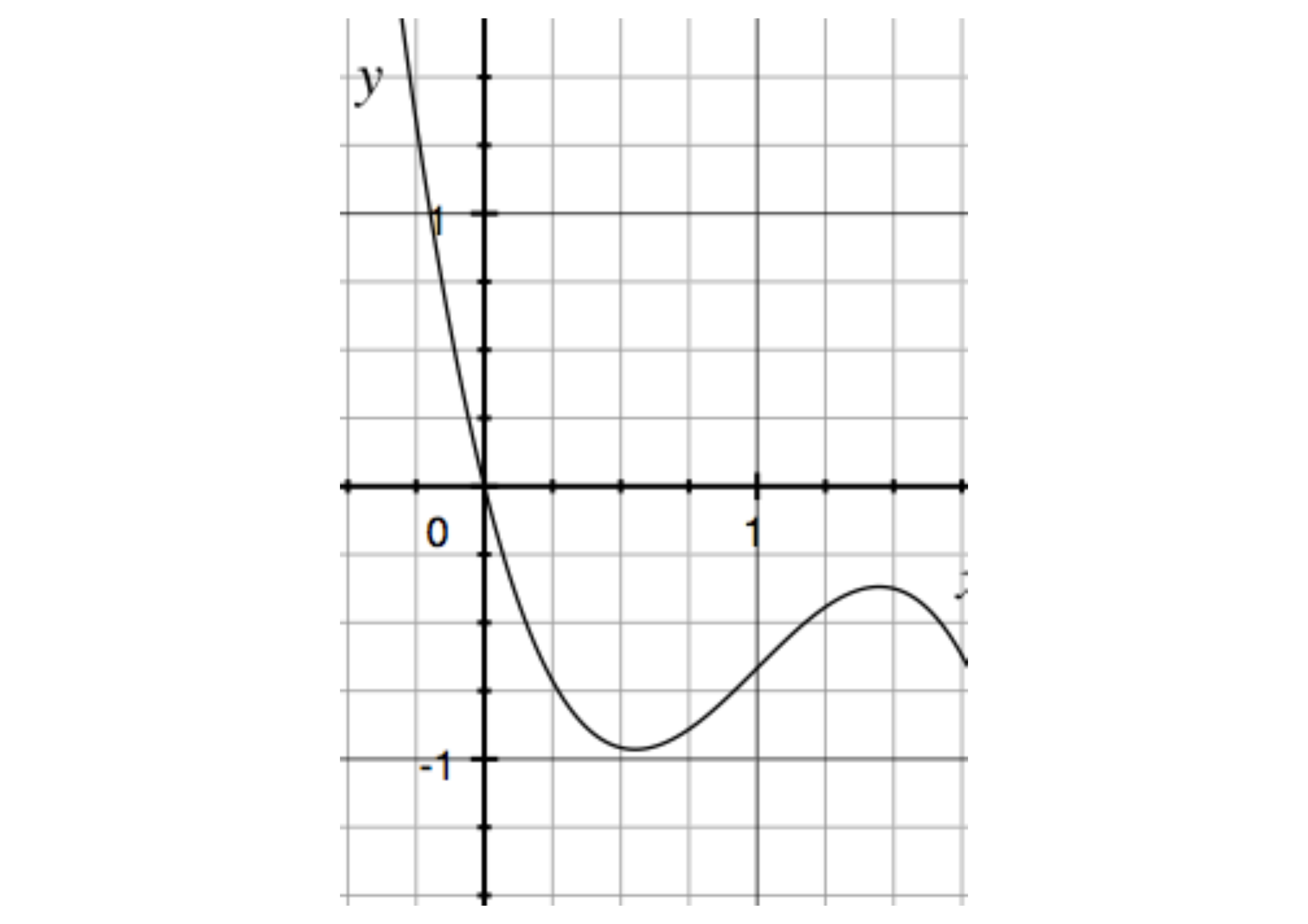}} %
\includegraphics[scale=0.225]{{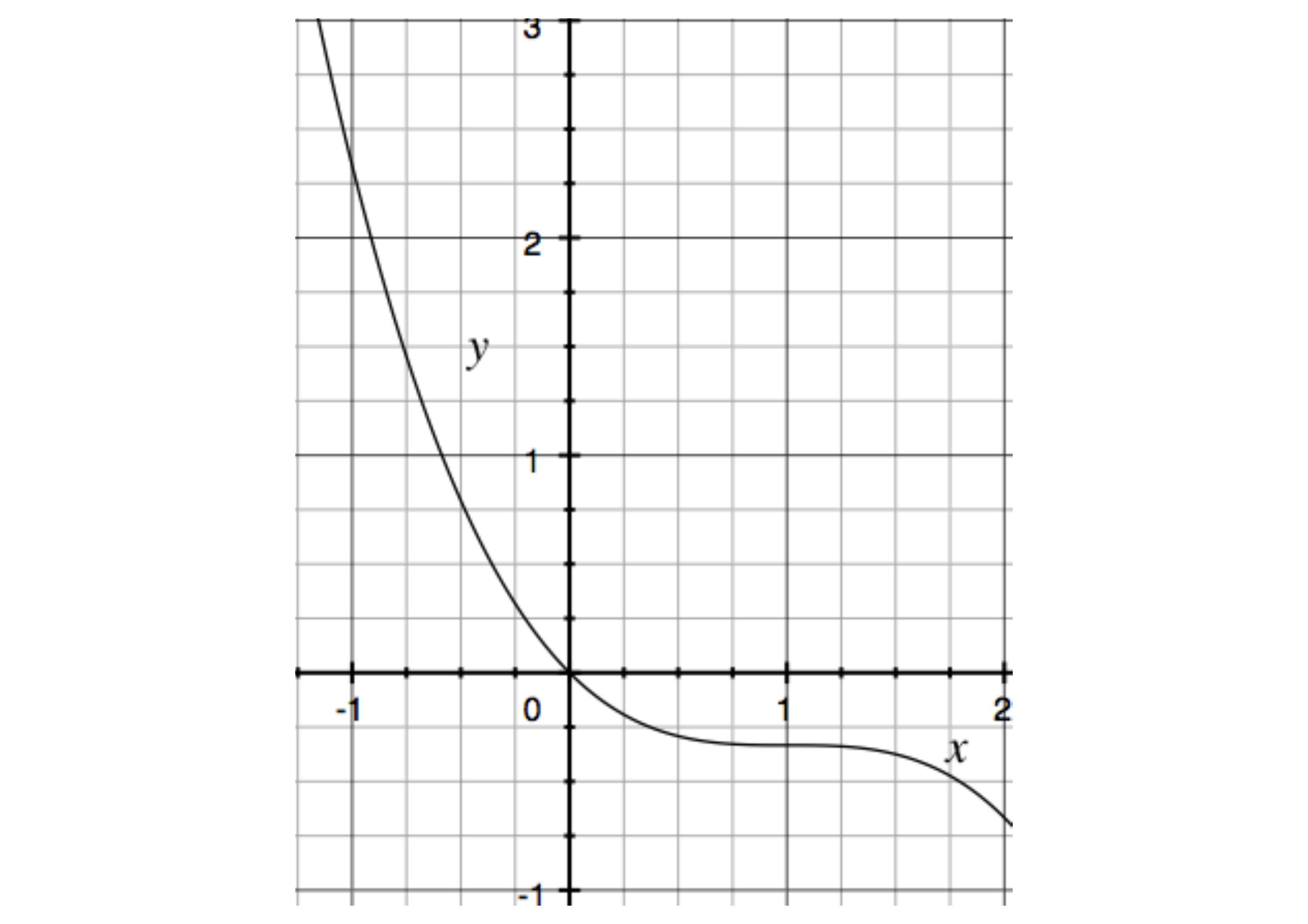}}

Fig.2

In this pictures, the free energies with minima in $\varphi _{4}<\varphi
_{5}=1$
\end{center}

Therefore, the increase of the fatigue shifts the minimum of the free energy 
$\psi _{F}$ toward the extreme value $\varphi =1.$ This involves the
material breaking due to energy instability.

\section{Fatigue for elastic virgin material}

In this section, we consider fatigue model for a material whose virgin state
is elastic. In such a case, the stress tensor $\mathbf{\tilde{T}}$ assumes
the following form%
\begin{equation}
\mathbf{\tilde{T}}(\sigma (t),P(t))=\mathbf{A}\nabla \mathbf{u}(t),
\label{27a}
\end{equation}%
where $\mathbf{A}$ is a positive and symmetric fourth order tensor and $%
\mathbf{u}$ denotes the displacement vector. So fatigue is given by%
\begin{eqnarray}
&&\mathcal{F}(\varphi (t),\nabla \mathbf{u}(t))=\int_{t_{0}}^{t}\left[ 1-%
\breve{\varphi}(\tau )\right] \mathbf{A}\nabla \mathbf{u}(\tau )\cdot \nabla 
\mathbf{\dot{u}}(\tau )d\tau   \notag \\
&&\quad =\frac{1}{2}\left[ 1-\breve{\varphi}(t)\right] \mathbf{A}\nabla 
\mathbf{u}(t)\cdot \nabla \mathbf{u}(t)+\frac{1}{2}\int_{t_{0}}^{t}\left[ 
\breve{\varphi}(\tau )\right] ^{\cdot }\mathbf{A}\nabla \mathbf{u}(\tau
)\cdot \nabla \mathbf{u}(\tau )d\tau ,\qquad   \label{27c}
\end{eqnarray}%
where we have supposed that $\varphi (t_{0})=0$ and $\nabla \mathbf{u}%
(t_{0})=\mathbf{0.}$

Then, from (\ref{27}), (\ref{15a})$_{1}$ and (\ref{27c})$_{2}$, we obtain
the free energy $\psi _{F}(\tilde{\sigma}(t))$ as a function of $(\nabla 
\mathbf{u},\varphi ,\nabla \varphi )$%
\begin{eqnarray*}
&&\psi _{F}(\nabla \mathbf{u}(t),\varphi (t),\nabla \varphi (t))=\frac{1}{%
\rho }\left \{ -\breve{\varphi}(t)\mathcal{F}(\tilde{\sigma}^{i},\tilde{P}%
_{t})+\frac{1}{2\kappa }\left[ \nabla \varphi (t)\right] ^{2}\right.  \\
&&\quad \left. +\mathcal{F}_{0}G(\varphi (t))\right \} =-\frac{1}{2\rho }%
\breve{\varphi}(t)\left \{ \left[ 1-\breve{\varphi}(t)\right] \mathbf{A}%
\nabla \mathbf{u}(t)\cdot \nabla \mathbf{u}(t)+\right.  \\
&&\quad \left. \int_{t_{0}}^{t}\left[ \breve{\varphi}(\tau )\right] ^{\cdot }%
\mathbf{A}\nabla \mathbf{u}(\tau )\cdot \nabla \mathbf{u}(\tau )d\tau
\right \} +\frac{1}{\rho }\left \{ \frac{1}{2\kappa }\left[ \nabla \varphi (t)%
\right] ^{2}+\mathcal{F}_{0}G(\varphi (t))\right \} .
\end{eqnarray*}

While the inequality (\ref{24}), using (\ref{27c})$_{1}$, assumes the form%
\begin{eqnarray*}
&&\rho \dot{\psi}(\tilde{\sigma}(t))\leq \left[ 1-\breve{\varphi}(t)\right] 
\mathbf{A}\nabla \mathbf{u}(t)\cdot \nabla \mathbf{\dot{u}}(t)+\rho \left[ 
\dot{\varphi}(t)\right] ^{2} \\
&&\quad +\frac{d}{dt}\left \{ -\breve{\varphi}(t)\int_{t_{0}}^{t}\left[ 1-%
\breve{\varphi}(\tau )\right] \mathbf{A}\nabla \mathbf{u}(\tau )\cdot \nabla 
\mathbf{\dot{u}}(\tau )d\tau +\frac{1}{2\kappa }\left[ \nabla \varphi (t)%
\right] ^{2}\right.  \\
&&\quad \, \left. +\mathcal{F}_{0}G(\varphi (t))\right \} =-\breve{\varphi}(t)%
\mathbf{A}\nabla \mathbf{u}(t)\cdot \nabla \mathbf{\dot{u}}(t)+\rho \left[ 
\dot{\varphi}(t)\right] ^{2}+\frac{1}{2}\frac{d}{dt}\left \{ [1-\breve{\varphi%
}(t)\right. \qquad  \\
&&\quad +\breve{\varphi}^{2}(t)]\mathbf{A}\nabla \mathbf{u}(t)\cdot \nabla 
\mathbf{u}(t)-\breve{\varphi}(t)\int_{t_{0}}^{t}\left[ \breve{\varphi}(\tau )%
\right] ^{\cdot }\mathbf{A}\nabla \mathbf{u}(\tau )\cdot \nabla \mathbf{u}%
(\tau )d\tau  \\
&&\quad \left. +\frac{1}{\kappa }\left[ \nabla \varphi (t)\right] ^{2}+2%
\mathcal{F}_{0}G(\varphi (t))\right \} .
\end{eqnarray*}%
\newline
Hence, the free energy for these materials is given by%
\begin{eqnarray*}
&&\psi (\nabla \mathbf{u}(t),\varphi (t),\nabla \varphi (t))=\frac{1}{2\rho }%
\left \{ \left[ 1-\breve{\varphi}(t)+\breve{\varphi}^{2}(t)\right] \mathbf{A}%
\nabla \mathbf{u}(t)\cdot \nabla \mathbf{u}(t)\right.  \\
&&\quad \left. -\breve{\varphi}(t)\int_{t_{0}}^{t}\dot{\varphi}(\tau )%
\mathbf{A}\nabla \mathbf{u}(\tau )\cdot \nabla \mathbf{u}(\tau )d\tau +\frac{%
1}{\kappa }\left[ \nabla \varphi (t)\right] ^{2}+2\mathcal{F}_{0}G(\varphi
(t)))\right \} .\qquad 
\end{eqnarray*}

\section{Differential system and maximum theorem}

Now, we consider the differential system for a fatigue mechanical problem
given by the equations (\ref{14}) and (\ref{16}), inside a smooth domain $%
\Omega \subset \mathbb{R}^{3}$ and in a time domain $I=\left[ 0,T\right]
\subset \mathbb{R}.$

In the case of an elastic material $\mathbf{\tilde{T}}$ and $\mathcal{F}(%
\tilde{\sigma}^{0},\tilde{P}_{t})$ are defined in (\ref{27a}) and (\ref{27c}%
), so, by using (\ref{17}, we have the differential system

\begin{eqnarray}
&&\rho (x)\frac{\partial \varphi (x,t)}{\partial t}=\nabla \cdot \frac{1}{%
\kappa (x)}\nabla \varphi (x,t)-\mathcal{F}_{0}\mathcal{(}x)\left \langle
2\varphi (x,t)-\frac{\varphi ^{2}(x,t)}{2}\right \rangle  \notag \\
&&\qquad \qquad -\left \langle -1\right \rangle \int_{0}^{t}\left[ 1-\breve{%
\varphi}(x,\tau )\right] \mathbf{A}(x)\nabla \mathbf{u}(x,\tau )\cdot \nabla 
\mathbf{\dot{u}}(x,\tau )d\tau ,  \label{33b} \\
&&\rho (x)\frac{\partial ^{2}\mathbf{u}(x,t)}{\partial t^{2}}=\nabla \cdot
\left \{ \left[ 1-\breve{\varphi}(t)\right] ^{2}\mathbf{A}(x)\nabla \mathbf{u%
}(x,t)\right \} +\rho (x)\mathbf{b}(x,t),  \label{33c}
\end{eqnarray}

with the initial and boundary conditions 
\begin{equation}
\mathbf{u}(x,0)=\mathbf{u}_{0}(x),~\mathbf{v}(x,0)=\mathbf{v}%
_{0}(x),~\varphi (x,0)=\varphi _{0}(x),  \label{33d}
\end{equation}%
\begin{equation}
\left. \mathbf{u}(x,t)\right \vert _{\partial \Omega }=\mathbf{0},~\  \left.
\nabla \varphi (x,t)\right \vert _{\partial \Omega }=0.  \label{33f}
\end{equation}

In order to define the week solutions of the problem (\ref{33b})-\ref{33f}),
we introduce the set of the functions $\mathbf{u}\in \mathcal{H}_{\mathbf{u}%
}:=H^{3/2}(0,T;L^{2}(\Omega )\cap L^{2}(0,T;H_{0}^{1}(\Omega ))$ and $%
\varphi \in \mathcal{H}_{\varphi }:=H^{1}(0,T;L^{2}(\Omega ))\cap
L^{2}(0,T;H_{\nabla }^{1}(\Omega )),$ where%
\begin{equation*}
H_{0}^{1}(\Omega )=\left \{ \mathbf{u}(x)\in H^{1}(\Omega );\left. \mathbf{u}%
(x)\right \vert _{\partial \Omega }=0\right \}
\end{equation*}%
and%
\begin{equation*}
H_{\nabla }^{1}(\Omega )=\left \{ \varphi (x)\in H^{1}(\Omega );\left.
\nabla \varphi (x)\cdot \mathbf{n}\right \vert _{\partial \Omega }=0\right
\} .
\end{equation*}%
Moreover, we suppose $\rho (x)$, $\kappa (x)$, $\mathbf{A}(x)\in L^{\infty
}(\Omega )$.

\begin{definition}
A pair $(\varphi ,\mathbf{u)\in ~}\mathcal{H}_{\varphi }\times \mathcal{H}_{%
\mathbf{u}}$ is called week solution of the problem (\ref{33b})-\ref{33c}),
with initial conditions (\ref{33d}) and boundary conditions (\ref{33f}), if
satisfies, for all $\psi \in \mathcal{H}_{\varphi }$ and $\mathbf{w}\in 
\mathcal{H}_{\mathbf{u}}$, the integral forms 
\begin{eqnarray}
&&\int_{0}^{T}\int_{\Omega }\left \{ \rho (x)\right. \dot{\varphi}(x,t)\dot{%
\psi}(x,t)+\frac{1}{\kappa (x)}\nabla \varphi (x,t)\cdot \nabla \dot{\psi}%
(x,t)(x,t)  \notag \\
&&\quad +\left \langle -1\right \rangle \int_{0}^{t}\left[ 1-\breve{\varphi}%
(x,\tau )\right] \mathbf{A}(x)\nabla \mathbf{u}(x,\tau )\cdot \nabla \mathbf{%
\dot{u}}(x,\tau )d\tau \dot{\psi}(x,t)  \notag \\
&&\quad \left. +\mathcal{F}_{0}\mathcal{(}x)\left \langle 2\varphi (x,t)-%
\frac{\varphi ^{2}(x,t)}{2}\right \rangle \dot{\psi}(x,t)\right \} dxdt=0
\label{33g}
\end{eqnarray}%
and 
\begin{eqnarray}
&&\int_{0}^{T}\left \{ \int_{\Omega }\rho (x)\mathbf{\ddot{u}}(x,t)\cdot 
\mathbf{\dot{w}}(x,t)+\left[ 1-\breve{\varphi}(x,t)\right] ^{2}\mathbf{A}%
(x)\nabla \mathbf{u}(x,t)\right.   \notag \\
&&\qquad \left. \cdot \nabla \mathbf{\dot{w}}(x,t)\right \}
dxdt=\int_{0}^{T}\int_{\Omega }\rho (x)\mathbf{b}(x,t)\cdot \mathbf{\dot{w}}%
(x,t)dxdt.  \label{33h}
\end{eqnarray}
\end{definition}

In order to obtain the well position of the differential problem (\ref{33b})-%
\ref{33f}) it is crucial to prove the following maximum theorem.

\begin{theorem}
Any solution of the equation (\ref{33b}), with initial conditions $0\leq
\varphi _{0}(x)\leq 1$, with boundary condition $\left. \nabla \varphi
(x,t)\right \vert _{\partial \Omega }=0,$ satisfies the restriction%
\begin{equation}
0\leq \varphi (x,t)\leq 1,~for\ all\ ~t>0~and\ a.e.\text{ }x\in \Omega .
\label{33i}
\end{equation}

\begin{proof}
Multiplying the equation (\ref{33b}) by $\varphi _{\_}(x,t)=\max \left[
-\varphi (x,t),0\right] $ and integrating on $\Omega ,$ we obtain%
\begin{eqnarray}
&&\int_{\Omega }\left( \rho \varphi _{\_}\varphi _{t}+\frac{1}{\kappa }%
\nabla \varphi _{\_}\cdot \nabla \varphi +\varphi _{\_}\mathcal{F}%
_{0}\left \langle 2\varphi -\frac{\varphi ^{2}}{2}\right \rangle \right) dx 
\notag \\
&&\qquad =-\int_{\Omega }\left \langle -1\right \rangle \varphi
_{\_}\int_{0}^{t}\left[ 1-\breve{\varphi}(\tau )\right] \mathbf{A}\nabla 
\mathbf{u}(\tau )\cdot \nabla \mathbf{\dot{u}}(\tau )d\tau dx.  \label{33l}
\end{eqnarray}%
It is easy to prove from the initial condition $\varphi _{0}(x)\in \lbrack
0,1]$, that%
\begin{equation*}
\int_{\Omega }\left \langle -1\right \rangle \varphi _{\_}(t)\int_{0}^{t}\left[
1-\breve{\varphi}(\tau )\right] \mathbf{A}\nabla \mathbf{u}(\tau )\cdot
\nabla \mathbf{\dot{u}}(\tau )d\tau dx=0;
\end{equation*}%
then, because $\varphi _{\_}\varphi _{t}=-\varphi _{\_}\varphi _{\_t}$, we
obtain from (\ref{33l}) for any $T>0$%
\begin{eqnarray*}
&&\int_{0}^{T}\int_{\Omega }\left[ \rho \varphi _{\_}(t)\varphi _{\_t}(t)+%
\frac{1}{\kappa }\nabla \varphi _{\_}(t)\cdot \nabla \varphi _{\_}(t)\right. 
\\
&&\qquad \left. -\varphi _{\_}(t)\mathcal{F}_{0}\left \langle 2\varphi (t)-%
\frac{\varphi ^{2}}{2}(t)\right \rangle \right] dxdt=0,
\end{eqnarray*}%
from which%
\begin{equation}
\int_{\Omega }\left \{ \frac{1}{2}\rho \left[ \varphi _{\_}(T)\right] ^{2}+%
\frac{1}{\kappa }\int_{0}^{T}\left[ \nabla \varphi _{\_}(t)\right]
^{2}dt\right \} dx=0,  \label{33q}
\end{equation}%
because 
\begin{equation*}
\int_{0}^{T}\int_{\Omega }\varphi _{\_}(t)\mathcal{F}_{0}\left \langle
2\varphi (t)-\frac{\varphi ^{2}}{2}(t)\right \rangle dxdt=0.
\end{equation*}%
Thus, from (\ref{33q}), $\varphi _{\_}(x,T)=0$ for any $T\in \lbrack
0,\infty )$ and $\nabla \varphi _{\_}(t)=\mathbf{0}$ for any $t\in \left[ 0,T%
\right] $\ a.e. $x\in \Omega .$

Hence, the restriction that $\varphi (x,t)\geq 0~$ is satisfied in (\ref{33b}%
).

To show that $\varphi (x,t)\leq 1$, let us to consider the function 
\begin{equation*}
(\varphi -1)_{+}=\left \{ 
\begin{array}{ll}
\varphi -1 & \forall \varphi >1 \\ 
0 & \forall \varphi \leq 1%
\end{array}%
\right.
\end{equation*}%
and suppose that $\varphi (x,t)>1$.

Thus, multiplying (\ref{33b}) by $(\varphi -1)_{+}$, after an integration on 
$\Omega ,$ we have%
\begin{eqnarray}
&&\int_{\Omega }\left[ \rho (\varphi -1)_{+}\, \varphi _{t}+\frac{1}{\kappa }%
\nabla (\varphi -1)_{+}\cdot \nabla \varphi +(\varphi -1)_{+}\mathcal{F}%
_{0}\left \langle 2\varphi -\frac{\varphi ^{2}}{2}\right \rangle \right] dx 
\notag \\
&&\; \qquad =-\left \langle -1\right \rangle \int_{\Omega }(\varphi
-1)_{+}\int_{0}^{t}\left[ 1-\breve{\varphi}(\tau )\right] \mathbf{A}\nabla 
\mathbf{u}(\tau )\cdot \nabla \mathbf{\dot{u}}(\tau )d\tau dx  \label{33qq}
\end{eqnarray}%
and hence, by integrating on $\left[ 0,T\right] $, it follows that%
\begin{eqnarray*}
&&\, \int_{0}^{T}\int_{\Omega }\left[ \rho (\varphi -1)_{+}(\varphi -1)_{t}+%
\frac{1}{\kappa }\nabla (\varphi -1)_{+}\cdot \nabla (\varphi -1)\right.  \\
&&\, \qquad \qquad \qquad \qquad \qquad \left. +(\varphi -1)_{+}\mathcal{F}%
_{0}\left \langle 2\varphi -\frac{\varphi ^{2}}{2}\right \rangle \right] dxdt
\\
&&\quad =-\int_{0}^{T}\int_{\Omega }\left \langle -1\right \rangle (\varphi
-1)_{+}\int_{0}^{t}\left[ 1-\breve{\varphi}(\tau )\right] \mathbf{A}\nabla 
\mathbf{u}(\tau )\cdot \nabla \mathbf{\dot{u}}(\tau )d\tau dxdt=0,\qquad 
\end{eqnarray*}%
since by hypothesis $\varphi (x,t)>1$ , then $\left \langle -1\right \rangle
(\varphi -1)_{+}=0$.

By integrating the first term and observing that in the third term also $%
G^{\prime }(\varphi )=\left \langle 2\varphi -\frac{\varphi ^{2}}{2}%
\right \rangle =0$ under the hypothesis $\varphi (x,t)>1$, this equation
reduces to%
\begin{equation}
\int_{\Omega }\left \{ \frac{1}{2}\rho \left[ (\varphi -1)_{+}(T)\right] ^{2}+%
\frac{1}{\kappa }\int_{0}^{T}\left[ \nabla (\varphi -1)_{+}(t)\right]
^{2}dt\right \} dx=0,  \label{33qr}
\end{equation}%
that is the sum of two positive quantities; therefore, each of these must be
equal to zero. Thus, it follows that $(\varphi -1)_{+}(x,T)=0$ for any $T\in
\lbrack 0,\infty )$ and also $\nabla (\varphi -1)_{+}(x,t)=\mathbf{0}$\ for
any $t\in \left[ 0,T\right] .\ $Hence, the hypothesis $\varphi (x,t)>1$ is
not satisfied; on the contary the inequality $\varphi (x,t)\leq 1$ holds\
and the restriction (\ref{33i}) has been proved.
\end{proof}
\end{theorem}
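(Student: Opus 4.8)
The plan is to run the classical Stampacchia truncation (energy) argument, exploiting the crucial structural fact --- built into the choice of potentials in (\ref{15a}) --- that both $F'(\varphi)=\langle-1\rangle$ and $G'(\varphi)=\langle 2\varphi-\varphi^2/2\rangle$ vanish identically outside the interval $[0,1]$. I would establish the two bounds separately and symmetrically: for $\varphi\ge 0$ I test (\ref{33b}) with the negative part $\varphi_-=\max(-\varphi,0)$, and for $\varphi\le 1$ with $(\varphi-1)_+=\max(\varphi-1,0)$.

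For the lower bound, multiplying (\ref{33b}) by $\varphi_-$ and integrating over $\Omega$, the decisive point is that $\varphi_-$ is supported on $\{\varphi<0\}$, where both $\langle-1\rangle=0$ and $\langle 2\varphi-\varphi^2/2\rangle=0$; hence the $\mathcal{F}_0$-term and --- more importantly --- the nonlocal fatigue memory integral drop out identically, so that the term which would otherwise be hard to control never has to be estimated at all. Integrating the diffusion term by parts, the boundary contribution vanishes by the Neumann condition $\nabla\varphi\cdot\mathbf{n}|_{\partial\Omega}=0$ in (\ref{33f}), and the Sobolev chain rule gives $\nabla\varphi\cdot\nabla\varphi_-=-|\nabla\varphi_-|^2$. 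Together with $\varphi_t\varphi_-=-\tfrac12\tfrac{d}{dt}\varphi_-^2$, this yields $\tfrac12\tfrac{d}{dt}\int_\Omega\rho\,\varphi_-^2\,dx=-\int_\Omega\tfrac{1}{\kappa}|\nabla\varphi_-|^2\,dx\le 0$. Since $\varphi_0\ge 0$ forces $\varphi_-(\cdot,0)=0$, a nonnegative quantity starting at zero with nonpositive derivative stays zero, so $\varphi_-\equiv 0$, i.e.\ $\varphi\ge 0$.

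For the upper bound the computation is the mirror image: testing with $(\varphi-1)_+$, supported on $\{\varphi>1\}$, where $\breve\varphi=1$ and again $F'(\varphi)=G'(\varphi)=0$, kills both nonlinear terms. Integration by parts (Neumann boundary term vanishing) together with $\nabla\varphi\cdot\nabla(\varphi-1)_+=|\nabla(\varphi-1)_+|^2$ and $\varphi_t(\varphi-1)_+=\tfrac12\tfrac{d}{dt}[(\varphi-1)_+]^2$ gives $\tfrac12\tfrac{d}{dt}\int_\Omega\rho\,[(\varphi-1)_+]^2\,dx=-\int_\Omega\tfrac{1}{\kappa}|\nabla(\varphi-1)_+|^2\,dx\le 0$, and $\varphi_0\le 1$ provides zero initial data, whence $(\varphi-1)_+\equiv 0$, i.e.\ $\varphi\le 1$.

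I expect the only real obstacle to be one of rigor rather than of idea: justifying the Sobolev chain rule and the time-derivative identities for the truncated functions at the regularity available for weak solutions, $\varphi\in\mathcal{H}_\varphi$, and confirming that the memory term is genuinely admissible as something one may discard on the relevant sets. The conceptual heart is simply the observation that the cutoff design of $F$ and $G$ forces the entire nonlinear right-hand side, fatigue integral included, to vanish exactly on the two sets where the test functions are supported, leaving only the coercive diffusion and storage terms to drive the conclusion.
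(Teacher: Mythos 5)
Your proposal is correct and follows essentially the same route as the paper: testing equation (\ref{33b}) with $\varphi_{-}$ and $(\varphi-1)_{+}$, exploiting that the cutoff structure of $F'$ and $G'$ makes every nonlinear term --- the fatigue memory integral included --- vanish identically on the supports of the test functions, and closing with the standard truncation energy identity. If anything, your pointwise support argument for discarding the memory term is stated more cleanly than the paper's brief appeal to the initial condition, but the mathematical content is the same.
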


\bigskip

The thesis of this theorem can be confirmed by the simulations represented
in Fig. 3, and 4, corresponding to several selections of parameters

\begin{center}
\includegraphics[scale=0.55]{{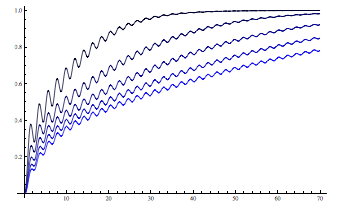}} \ %
\includegraphics[scale=0.55]{{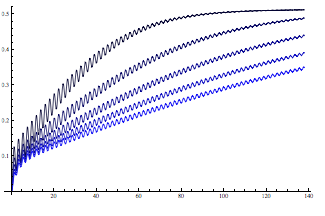}}

Fig.3. The first graphic describes the behavior of the phase field $\varphi $
corresponding to different values of the density $\bar{\rho}_{i}=1,2,3,4,5$
and a time $t=140$, with the same frequency $\bar{\omega}$. In the second
graphic, the frequency is the same $\bar{\omega}$, but the densities $\rho
_{i}=2\bar{\rho}_{i}.$

\includegraphics[scale=0.7]{{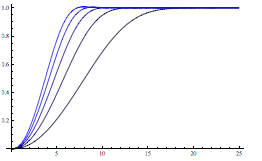}} \ %
\includegraphics[scale=0.72]{{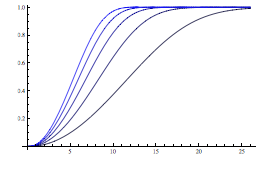}}

Fig. 4. These graphics describe the behavior of $\varphi $ for different
values of the frequency $\bar{\omega}_{i}=1,2,3,4,5$, with the same time $%
t=26$, but with two values of the density.

\bigskip
\end{center}

Let us now study a uniqueness theorem for the null solution. So that, we
consider zero initial data and supplies.

\begin{theorem}
The system (\ref{33g})-\ref{33h}) with zero initial data and supplies admits
only the null solution.
\end{theorem}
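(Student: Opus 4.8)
The plan is to establish uniqueness by the standard energy method: suppose two weak solutions exist, form their difference, and show that an appropriate energy functional built from the difference vanishes, using the zero initial data. Since the system is nonlinear (through the terms $[1-\breve\varphi]^2$ and $\langle 2\varphi-\varphi^2/2\rangle$ coupling $\varphi$ and $\mathbf{u}$), a direct subtraction of the two copies of the equations is awkward; instead I would exploit that the claimed solution here is the \emph{null} solution, so I only need to show that any weak solution $(\varphi,\mathbf{u})$ with zero initial data and zero body force $\mathbf{b}$ is identically zero. This reduces the problem to an a priori estimate rather than a genuine two-solution comparison.

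First I would test the weak form \eqref{33g} with $\psi=\varphi$ and the weak form \eqref{33h} with $\mathbf{w}=\mathbf{u}$, so that the time-differentiated test functions $\dot\psi=\dot\varphi$ and $\dot{\mathbf{w}}=\dot{\mathbf{u}}$ produce recognizable energy rates. In \eqref{33h} the inertial term gives $\tfrac12\tfrac{d}{dt}\int_\Omega\rho|\dot{\mathbf{u}}|^2$ and, since $\mathbf{b}=0$, the elastic term $[1-\breve\varphi]^2\mathbf{A}\nabla\mathbf{u}\cdot\nabla\dot{\mathbf{u}}$ remains, which is exactly the mechanical power appearing in the fatigue integral. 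In \eqref{33g}, testing against $\dot\varphi$ gives $\int_\Omega\rho|\dot\varphi|^2$ plus a term $\tfrac{1}{\kappa}\nabla\varphi\cdot\nabla\dot\varphi=\tfrac12\tfrac{d}{dt}\tfrac1\kappa|\nabla\varphi|^2$, plus the $G'$ and fatigue contributions. The natural move is then to add the two tested identities and integrate over $[0,T]$, so that the cross terms involving the fatigue functional $\mathcal F$ and its time derivative telescope against the mechanical power, reconstructing the pseudo fatigue energy $\psi_F$ of \eqref{27} and the total free energy $\psi$ already computed for elastic materials in Section 5.

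The key step is recognizing that the energy identity so obtained is, up to the dissipative term $+\rho|\dot\varphi|^2$, precisely the time integral of the free-energy balance \eqref{24b}--\eqref{27-1}; with zero initial data every boundary-of-time contribution vanishes, leaving a sum of manifestly nonnegative quantities (the kinetic energy $\tfrac12\rho|\dot{\mathbf u}|^2$, the gradient energy $\tfrac1{2\kappa}|\nabla\varphi|^2$, the stored elastic energy, and $\mathcal F_0 G(\varphi)$, the latter being nonnegative since $G\ge0$) equated to zero or bounded by a Gr\"onwall-type expression in itself. One must handle the sign of the coupling terms carefully, using that $0\le\varphi\le1$ from the preceding maximum theorem (Theorem applying to \eqref{33i}), so that $\breve\varphi=\varphi$ and $G'(\varphi)\ge0$, $F'(\varphi)=-1$, which makes the fatigue functional nonnegative and the manipulations in Section 5 legitimate. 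Once the identity reads ``nonnegative energy $=0$,'' I conclude $\dot{\mathbf u}\equiv0$, $\nabla\varphi\equiv0$, and $\dot\varphi\equiv0$ a.e.; combined with the zero initial conditions this forces $\mathbf u\equiv0$ and $\varphi\equiv0$, which is the null solution.

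The main obstacle I anticipate is the fatigue term: because $\mathcal F(\tilde\sigma^i,\tilde P_t)$ is itself a time integral of the mechanical power, the term $\langle -1\rangle\int_0^t[1-\breve\varphi]\mathbf A\nabla\mathbf u\cdot\nabla\dot{\mathbf u}\,d\tau$ in \eqref{33g} is nonlocal in time and does not immediately combine into a pure time derivative. I would address this exactly as in the passage from \eqref{27c} to the displayed free energy in Section 5, integrating by parts in $\tau$ to split it into an instantaneous quadratic term $\tfrac12[1-\breve\varphi(t)]\mathbf A\nabla\mathbf u(t)\cdot\nabla\mathbf u(t)$ plus a history term $\tfrac12\int_0^t\dot{\breve\varphi}\,\mathbf A\nabla\mathbf u\cdot\nabla\mathbf u\,d\tau$, then absorb the history term into a Gr\"onwall estimate using the boundedness of $\dot\varphi$ and the coercivity of $\mathbf A$. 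Provided that estimate closes, the vanishing of the total energy follows and uniqueness of the null solution is established.
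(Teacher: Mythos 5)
Your strategy coincides with the paper's own proof up to the decisive last step: the paper also tests (\ref{33g}) with $\psi=\varphi$ and (\ref{33h}) with $\mathbf{w}=\mathbf{u}$, sums the two identities, invokes the maximum theorem (\ref{33i}) to fix $\breve\varphi=\varphi$, $F'=-1$, $G'\ge 0$, and performs exactly the integration by parts in $\tau$ that you describe, arriving at the energy identity (\ref{33r}), whose left-hand side integrates to the nonnegative total energy at time $T$ and whose right-hand side carries the terms $-\rho\dot\varphi^2+\frac12\dot\varphi\int_0^t\dot\varphi\,\mathbf{A}\nabla\mathbf{u}\cdot\nabla\mathbf{u}\,d\tau-\frac12\dot\varphi\,[1-\varphi]\,\mathbf{A}\nabla\mathbf{u}\cdot\nabla\mathbf{u}$. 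Where you diverge is in how this right-hand side is disposed of, and there your proposal has a genuine gap. The Gr\"onwall absorption you sketch needs (i) an $L^\infty$ bound on $\dot\varphi$, which the weak-solution class $\mathcal{H}_\varphi=H^1(0,T;L^2(\Omega))\cap L^2(0,T;H^1_\nabla(\Omega))$ does not provide ($\dot\varphi$ is only $L^2$), and (ii) control of $\mathbf{A}\nabla\mathbf{u}\cdot\nabla\mathbf{u}$ by the stored elastic energy actually present in the identity, namely $[1-\varphi]^2\,\mathbf{A}\nabla\mathbf{u}\cdot\nabla\mathbf{u}$, which degenerates as $\varphi\to1$. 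The maximum theorem gives $\varphi\le1$, not $\varphi\le1-\delta$, so the Gr\"onwall kernel you would need, essentially $\|\dot\varphi\|_\infty/(1-\varphi)$, is not finite; and if instead you estimate the cubic coupling term by Young's inequality against $\rho\dot\varphi^2$, you are left with a quartic quantity $\bigl(\mathbf{A}\nabla\mathbf{u}\cdot\nabla\mathbf{u}\bigr)^2$ that the quadratic energy does not dominate. Your closing caveat ``provided that estimate closes'' is precisely the point: as stated, it does not close.

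The paper closes the argument by a different device, localization and continuation, which your write-up is missing. It asserts, using the zero initial data, that on some small interval $\left[0,T_\varepsilon\right]$ the sign inequality (\ref{33ww}) holds, i.e.\ the whole cross term $\dot\varphi\left\{\int_0^t\dot\varphi\,\mathbf{A}\nabla\mathbf{u}\cdot\nabla\mathbf{u}\,d\tau-[1-\varphi]\,\mathbf{A}\nabla\mathbf{u}\cdot\nabla\mathbf{u}\right\}$ is nonpositive there; then the right-hand side of (\ref{33r}) is $\le0$ while the left-hand side is $\ge0$, forcing $\mathbf{u}\equiv\mathbf{0}$ and $\varphi\equiv0$ on $\left[0,T_\varepsilon\right]$, and an iteration restarting from the (again null) data at $T_\varepsilon$ extends nullity to all times. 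If you want to salvage your Gr\"onwall route you must emulate exactly this structure: restrict to a small time interval on which, by continuity from $\varphi(0)=0$, the field $\varphi$ can be kept away from $1$ and the history term subordinate, close the estimate there, and then continue by iteration; without the local-in-time restriction there is no way to supply the missing uniform bounds. It is fair to add that the paper's own inequality (\ref{33ww}) is asserted rather than proved (the sign of $\dot\varphi$ is never controlled), so this final step is the weak link of both arguments --- but your version replaces it with an estimate that demonstrably cannot hold in the stated generality, rather than with a proof.
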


\begin{proof}
From the system (\ref{33g})-\ref{33h}) with $\psi =\varphi $ and $\mathbf{%
w=u,}$ and initial and boundary conditions%
\begin{equation}
\mathbf{u}(x,0)=\mathbf{0},~\mathbf{\dot{u}}(x,0)=\mathbf{0},~\varphi
(x,0)=0,  \label{33o}
\end{equation}%
\begin{equation}
\left. \mathbf{u}(x,t)\right \vert _{\partial \Omega }=\mathbf{0},~\  \left.
\nabla \varphi (x,t)\right \vert _{\partial \Omega }=0,  \label{33p}
\end{equation}%
we obtain 
\begin{eqnarray}
&&\int_{0}^{T}\int_{\Omega }\left \{ \rho (x)\left[ \dot{\varphi}(x,t)\right]
^{2}+\frac{1}{2}\frac{1}{\kappa (x)}\frac{\partial }{\partial t}\left[
\nabla \varphi (x,t)\right] ^{2}\right.   \notag \\
&&\quad +\left \langle -1\right \rangle \int_{t_{0}}^{t}\left[ 1-\breve{\varphi%
}(x,\tau )\right] \mathbf{A}(x)\nabla \mathbf{u}(x,\tau )\cdot \nabla 
\mathbf{\dot{u}}(x,\tau )d\tau \dot{\varphi}(x,t)  \notag \\
&&\quad \left. +\mathcal{F}_{0}\mathcal{(}x)\left \langle 2\varphi (x,t)-%
\frac{\varphi ^{2}(x,t)}{2}\right \rangle \right \} \dot{\varphi}(x,t)dxdt=0
\label{33m}
\end{eqnarray}%
and%
\begin{equation}
\int_{0}^{T}\int_{\Omega }\rho (x)\mathbf{\ddot{u}}(x,t)\cdot \mathbf{\dot{u}%
}(x,t)+\left[ 1-\breve{\varphi}(x,t)\right] ^{2}\mathbf{A}(x)\nabla \mathbf{u%
}(x,t)\cdot \nabla \mathbf{\dot{u}}(x,t)dxdt=0.  \label{33n}
\end{equation}%
Summing these equations and taking into account the previous result $0\leq
\varphi (x,t)\leq 1$ and the corresponding expressions of the interested
quantities, we obtain%
\begin{eqnarray}
&&\int_{0}^{T}\int_{\Omega }\left( \frac{1}{2}\frac{\partial }{\partial t}%
\left \{ \rho (x)\mathbf{\dot{u}}^{2}(x,t)+\left[ 1-\varphi (x,t)\right] ^{2}%
\mathbf{A}(x)\nabla \mathbf{u}(x,t)\cdot \nabla \mathbf{u}(x,t)\right.
\right.   \notag \\
&&\; \left. \left. +\frac{1}{\kappa (x)}\left[ \nabla \varphi (x,t)\right]
^{2}\right \} +\mathcal{F}_{0}\mathcal{(}x)\left( 2\varphi (x,t)-\frac{%
\varphi ^{2}(x,t)}{2}\right) \dot{\varphi}(x,t)\right) dxdt  \notag \\
&=&\int_{0}^{T}\int_{\Omega }\left \{ -\rho (x)\left[ \dot{\varphi}(x,t)%
\right] ^{2}+\frac{1}{2}\dot{\varphi}(x,t)\int_{0}^{t}\dot{\varphi}(x,\tau )%
\mathbf{A}(x)\nabla \mathbf{u}(x,\tau )\right.   \notag \\
&&\; \left. \cdot \nabla \mathbf{u}(x,\tau )d\tau -\frac{1}{2}\dot{\varphi}%
(x,t)\left[ 1-\varphi (x,t)\right] \mathbf{A}(x)\nabla \mathbf{u}(x,t)\cdot
\nabla \mathbf{u}(x,t)\right \} dxdt.\qquad \quad   \label{33r}
\end{eqnarray}%
Then, using the initial conditions (\ref{33o}), there exists a $%
T_{\varepsilon }>0$ such that for all $t\in \left[ 0,T_{\varepsilon }\right] 
$%
\begin{eqnarray}
0 &\geq &\dot{\varphi}(x,t)\left \{ \int_{0}^{t}\dot{\varphi}(x,\tau )\mathbf{%
A}(x)\nabla \mathbf{u}(x,\tau )\cdot \nabla \mathbf{u}(x,\tau )d\tau \right. 
\label{33ww} \\
&&-\left. \left[ 1-\varphi (x,t)\right] \mathbf{A}(x)\nabla \mathbf{u}%
(x,t)\cdot \nabla \mathbf{u}(x,t)\right \} ;  \notag
\end{eqnarray}%
then, we have by (\ref{33r}) and by the initial and boundary conditions (\ref%
{33o})-(\ref{33p}), that%
\begin{equation}
\mathbf{u}(x,t)=\mathbf{0},~\varphi (x,t)=0,~\ t\in \left[ 0,T_{\varepsilon }%
\right] ~and~a.e.\text{ }x\in \Omega .  \label{33w}
\end{equation}%
Finally, after an iteration procedure, we can prove that (\ref{33w}) is
satisfied for any $t\in \left[ 0,\infty \right] .$
\end{proof}

\section{Thermo-mechanical fatigue life}

It is apparent that fatigue may be affected by the temperature (see \cite{KC}%
, \cite{MR}). Thus, fatigue life of a material could change under different
temperatures, especially, when the variations of temperature are frequent
and with high gradients or thermal shock.

In this framework, we have to introduce the heat equation. So that, we
consider the First Law of Thermodynamics%
\begin{equation}
\rho (x)~\dot{e}(x,t)=\mathcal{P}_{m}^{i}(x,t)+\mathcal{P}_{h}^{i}(x,t)+%
\mathcal{P}_{s}^{i}(x,t),  \label{34}
\end{equation}%
where $e$ is the internal energy, $\mathcal{P}_{m}^{i}$ the internal
mechanical power, $\mathcal{P}_{h}^{i}$ the internal heat power and $%
\mathcal{P}_{s}^{i}$ the internal structural power. Moreover, in this paper
we confine the study to a \textit{simple material }(see \cite{TN}, \cite{FLN}%
), for which the internal mechanical work $\mathcal{P}_{m}^{i}$ is given by%
\begin{equation}
\mathcal{P}_{m}^{i}=\mathbf{T}\cdot \mathbf{L}  \label{34a}
\end{equation}%
With a view to well define the differential system, we introduce the motion
equation%
\begin{equation}
\rho (x)\frac{\partial \mathbf{v}(x,t)}{\partial t}=\nabla \cdot \left[ 
\mathbf{T}(x,t)+\varkappa \theta \mathbf{I}\right] +\rho (x)\mathbf{b}%
(x,t),~\  \varkappa >0,  \label{35}
\end{equation}%
where $\theta $~is the absolute temperature and $\mathbf{I}$\ is the
identity tensor, the heat equation%
\begin{equation}
\mathcal{P}_{h}^{i}(x,t)=-\nabla \cdot \mathbf{q(}x,t)+\rho (x)r(x,t)
\label{36}
\end{equation}%
and the Ginzburg-Landau equation%
\begin{eqnarray}
\rho (x)\frac{\partial \varphi (x,t)}{\partial t} &=&\nabla \cdot \frac{1}{%
\kappa (x)}\nabla \varphi (x,t)  \notag \\
&&-\mathcal{F(}x,t)F^{\prime }(\varphi (x,t))-\mathcal{F}_{0}\mathcal{(}%
x)G^{\prime }(\varphi (x,t)).  \label{37a}
\end{eqnarray}%
The fatigue $\mathcal{F(}\hat{\sigma}(t))$ is now defined by%
\begin{eqnarray}
\mathcal{F(}\hat{\sigma}(t)) &=&\int_{t_{0}}^{t}\left[ 1-\breve{\varphi}%
(\tau )\right] \left[ -\frac{\mathcal{P}_{h}^{i}(\tau )}{\theta (\tau )}+%
\mathbf{q}\cdot \nabla \theta ^{-1}(\tau )\right] d\tau   \notag \\
&=&\int_{t_{0}}^{t}\left[ 1-\breve{\varphi}(\tau )\right] \left \{ [\mathbf{%
\tilde{T}}(\sigma (\tau ),P(\tau ))\cdot \nabla \mathbf{v}(\tau )+\mathcal{P}%
_{s}^{i}(\tau )\right.   \notag \\
&&\quad \left. -\rho (x)\dot{e}(\tau )]\theta ^{-1}(\tau )+\mathbf{q}(\tau
)\cdot \nabla \theta ^{-1}(\tau )\right \} d\tau ,  \label{38}
\end{eqnarray}%
which on closed cycles assumes the form%
\begin{eqnarray}
&&\mathcal{F(}\hat{\sigma}(t))=\oint_{t_{0}}^{t}\left[ 1-\breve{\varphi}%
(\tau )\right] [\mathbf{\tilde{T}}(\sigma (\tau ),P(\tau ))\cdot \mathbf{L}%
(\tau )\theta ^{-1}(\tau )  \notag \\
&&\,+\mathcal{P}_{s}^{i}(\tau )\theta ^{-1}(\tau )+\mathbf{q}(\tau )\cdot
\nabla \theta ^{-1}(\tau )]d\tau -\oint_{t_{0}}^{t}\rho (x)\left[ 1-\breve{%
\varphi}(\tau )\right] \frac{\dot{e}(\tau )}{\theta (\tau )}d\tau .\quad 
\label{39}
\end{eqnarray}

Then, by (\ref{37a}) the balance of the structural power provides%
\begin{eqnarray*}
&&\, \nabla \cdot \left[ \frac{1}{\kappa }(x)\dot{\varphi}(x,t)\nabla
\varphi (x,t)\right] =\rho (x)\dot{\varphi}^{2}(x,t)+\frac{1}{\kappa (x)}%
\nabla \varphi (x,t)\cdot \nabla \dot{\varphi}(x,t) \\
&&\qquad \qquad \qquad \qquad \qquad +\mathcal{F(}\hat{\sigma}(t))\dot{F}%
(\varphi (x,t))+\mathcal{F}_{0}\mathcal{(}x)\dot{G}(\varphi (x,t)).
\end{eqnarray*}%
So, the internal structural power is defined by%
\begin{eqnarray}
\mathcal{P}_{s}^{i}(x,t) &=&\rho (x)\dot{\varphi}^{2}(x,t)+\frac{1}{\kappa
(x)}\nabla \varphi (x,t)\cdot \nabla \dot{\varphi}(x,t)  \notag \\
&&+\mathcal{F(}\hat{\sigma}(t))\dot{F}(\varphi (x,t))+\mathcal{F}_{0}%
\mathcal{(}x)\dot{G}(\varphi (x,t)).  \label{40c}
\end{eqnarray}

Finally, by (\ref{34}), (\ref{34a}) and (\ref{36}) we obtain the heat
equation%
\begin{equation*}
\rho \dot{e}-\mathbf{T\cdot L-}\mathcal{P}_{s}^{i}=-\nabla \cdot \mathbf{q}%
+\rho r,
\end{equation*}%
where $\mathcal{P}_{s}^{i}(x,t)$\ is given by (\ref{40c}).

This equation for an elastic material takes the wording 
\begin{equation*}
\rho c\dot{\theta}-\rho (x)\dot{\varphi}^{2}(x,t)-\mathcal{F(}\hat{\sigma}%
(t))\dot{F}(\varphi (x,t))=-\nabla \cdot \mathbf{q}+\rho r
\end{equation*}%
and from (\ref{40c}), $\mathcal{F(}\hat{\sigma}(t))$ assumes the form%
\begin{eqnarray}
\mathcal{F(}\hat{\sigma}(t)) &=&\int_{t_{0}}^{t}(\left[ 1-\breve{\varphi}%
(\tau )\right] \mathbf{A}\nabla \mathbf{u}(\tau )\cdot \nabla \mathbf{\dot{u}%
}(\tau )\theta ^{-1}(\tau )  \notag \\
&&+\mathbf{q}(\tau )\cdot \nabla \theta ^{-1}(\tau )(d\tau .  \label{41}
\end{eqnarray}

\begin{acknowledgement}
The author has been partially supported by Italian G.N.F.M. - I.N.D.A.M and
by University of Bologna. The Figures 3 and 4 were generated using the
software package Mathematica (ver. 8.0.4). Finally, the author is grateful
to professors Caputo and Fremond for their suggestions and advises.
\end{acknowledgement}

\end{document}